\def\Hinf{\mbox{${\mathcal{H}_\infty}$}}
\def\H2{\mbox{${\mathcal{H}_2}$}}
\def\calH{\mathcal{H}}
\def\R{\mathbb{R}}
\def\D{\mathcal{D}}
\def\L{\mathbb{L}}
\def\G{\mathbb{G}}
\def\AA{\mathbf{A}}
\def\MA{\overline{\bf A}}
\def\E{\mathbf{E}}
\def\e{\mathrm{e}}
\def\h{\mathbf{h}}
\def\sn{\mbox{$| \! | \! |$}}
\def\ds{\displaystyle}
\def\ol{\overline}
\def\wh{\widehat}
\def\whR{\wh{R}}
\def\whP{\wh{P}}
\def\olA{\ol{A}}
\def\cov{{\bf cov}}
\newcommand{\esssup}{\mathop{\mathrm{ess\,sup}}}
\newcommand{\tr}{\mathop{\mathrm{tr}}\nolimits}
\newcommand{\T}{\mathrm{T}}
\newcommand{\mes}{\mathop{\mathrm{mes\,}}\nolimits}
\newtheorem{theorem}{Theorem}
\newtheorem{lemma}{Lemma}
\newtheorem{remark}{Remark}
\def\be#1{\begin{equation}\label{#1}}
\def\ee{\end{equation}}
\begin{document}
\title{A Convex Formulation of Strict Anisotropic Norm Bounded Real Lemma\footnote{This work was supported
    by the Russian Foundation for Basic Research (grant 11-08-00714-a) and Program for Fundamental Research No.~15
of EEMCP Division of Russian Academy of Sciences.}}
\author{Michael M.~Tchaikovsky$^\dag$\qquad Alexander P.
Kurdyukov$^\dag$\qquad Victor N. Timin\footnote{The authors are
with Institute of Control Sciences of Russian Academy of Sciences,
Moscow, Russia, 117997, Profsoyuznaya 65, fax:
+7\,495\,334\,93\,40, e-mails: mmtchaikovsky@hotmail.com,
akurd@ipu.ru,\newline timin.victor@rambler.ru.}}
\date{\today}
\maketitle

\begin{abstract}
This paper is aimed at extending the $\Hinf$ Bounded Real Lemma to
stochastic systems under random disturbances with imprecisely
known probability distributions. The statistical uncertainty is
measured in entropy theoretic terms using the mean anisotropy
functional. The disturbance attenuation capabilities of the system
are quantified by the anisotropic norm which is a stochastic
counterpart of the $\Hinf$ norm. A state-space sufficient
criterion for the anisotropic norm of a linear discrete time
invariant system to be bounded by a given threshold value is
derived. The resulting Strict Anisotropic Norm Bounded Real Lemma
involves an inequality on the determinant of a positive definite
matrix and a linear matrix inequality. It is shown that slight
reformulation of these conditions allows the anisotropic norm of a
system to be efficiently computed via convex optimization.
\end{abstract}

\textbf{Keywords:} linear systems, random input, uncertainty,
norms, anisotropy, convex optimization

%===============================================================================

\begin{flushright}
\emph{Dedicated to the blessed memory of our comrade and colleague
Eugene Maximov.}
\end{flushright}

\section{Introduction}

The anisotropy of a random vector and the anisotropic norm of a
system are the main concepts of the anisotropy-based theory of
robust stochastic control originally developed by I.G.~Vladimirov
and presented in~\cite{SVK_1994}--\cite{VKS_1996_1}.

The anisotropy functional considered there is an entropy theoretic
measure of the deviation of a probability distribution in
Euclidean space from Gaussian distributions with zero mean and
scalar covariance matrices. The mean anisotropy of a stationary
random sequence is defined as the anisotropy production rate per
time step for long segments of the sequence. In application to
random disturbances, the mean anisotropy describes the amount of
statistical uncertainty which is understood as the discrepancy
between the imprecisely known actual noise distribution and the
family of nominal models which consider the disturbance to be a
Gaussian white noise sequence with a scalar covariance matrix.

Another fundamental concept of I.G.\,Vladimirov's theory is the
$a$-anisotropic norm of a linear discrete time invariant (LDTI)
system which quantifies the disturbance attenuation capabilities
by the largest ratio of the power norm of the system output to
that of the input provided that the mean anisotropy of the input
disturbance does not exceed a given nonnegative parameter $a$.

In the context of robust stochastic control design aimed at
suppressing the potentially harmful effects of statistical
uncertainty, the anisotropy-based approach offers an important
alternative to those control design procedures that rely upon a
specific probability law of the disturbance and the assumption
that it is known precisely.

Minimization of the anisotropic norm of the closed-loop system as
a performance criterion leads to internally stabilizing dynamic
output feedback controllers that are less conservative than the
$\Hinf$ controllers and more efficient for attenuating the
correlated disturbances than the $\H2$ (LQG) controllers. A
state-space solution to the anisotropic optimal control problem
derived by I.G.\,Vladimirov in~\cite{VKS_1996_2} results in a
unique full-order estimator-based controller and involves the
solution of three cross-coupled algebraic Riccati equations, an
algebraic Lyapunov equation and a mean anisotropy equation on the
determinant  of a related matrix. Solving this complex system of
equations  requires application of a specially developed
homotopy-based numerical algorithm~\cite{DKSV_1997_report}.

The anisotropic suboptimal controller design is a natural
extension of this approach. Instead of minimizing   the
anisotropic norm of the closed-loop system, a suboptimal
controller is only required to keep it below a given threshold
value. Rather than resulting in a unique controller, the
suboptimal design yields a family of controllers, thus providing
freedom to impose some additional performance specifications on
the closed-loop system.

The ani\-so\-tro\-pic  suboptimal control design requires a
state-space criterion for verifying if the ani\-so\-tro\-pic norm
of a system does not exceed a given value. The Anisotropic Norm
Bounded Real Lemma (ANBRL) as a stochastic counterpart of the
$\Hinf$ Bounded Real Lemma for LDTI systems under statistically
uncertain stationary Gaussian random disturbances with bounded
mean anisotropy was presented in~\cite{KMT_2010}. The resulting
criterion has the form of an inequality on the determinant of a
matrix associated with an algebraic Riccati equation which depends
on a scalar parameter. A similar criterion for linear discrete
time varying systems involving a time-dependent inequality and
difference Riccati equation can be found in~\cite{MKV_2011}. This
paper aims at improving numerical tractability of ANBRL by
representing the criterion as a convex optimization problem. These
results are applied in~\cite{TKT_2011_asynCP} to design of the
suboptimal anisotropic controllers by means of convex optimization
and semidefinite programming.

The paper is organized as follows. Section~\ref{sect:background}
provides the minimum necessary background on the anisotropy of
signals and anisotropic norm of systems. Section~\ref{sect:main
result} establishes the Strict Anisotropic Norm Bounded Real Lemma
(SANBRL) which constitutes the main result of the paper. In
Subsection~\ref{sect:aninorm computing} we slightly reformulate
the SANBRL for efficient computation of the anisotropic norm of a
system by convex optimization. Subsection~\ref{sect:limiting
cases} considers $\H2$ and $\Hinf$ norms as two limiting cases of
the anisotropic norm. It is shown that in these cases the SANBRL
conditions transform to the well-known criteria for $\H2$ and
$\Hinf$ norms, respectively. Section~\ref{sect:numerical example}
presents benchmark results to compare the novel computational
algorithm with an earlier approach which employs a homotopy-based
algorithm for solving a system of cross-coupled nonlinear matrix
algebraic equations developed by
I.G.\,Vladimirov~\cite{DKSV_1997_report}. Concluding remarks are
given in Section~\ref{sect:conclusion}.

\subsection{Notation}

The set of reals is denoted by $\R,$ the set of real $(n\times m)$
matrices is denoted by $\R^{n\times m}.$ For a complex matrix $M =
[m_{ij}]$, $M^\ast$ denotes the Hermitian conjugate of the matrix:
$M^\ast:= [m^\ast_{ji}].$ For a real matrix $M = [m_{ij}]$, $M^\T$
denotes the transpose of the matrix: $M^\T := [m_{ji}].$ For real
symmetric matrices, $M\succ N$ ($M\succcurlyeq N$) stands for
positive
 definiteness (semidefiniteness) of $M-N$. The trace of a square matrix $M
=[m_{ij}]$ is denoted by $\tr{M}:=\sum_k{m_{kk}}.$ The spectral
radius of a matrix $M$ is denoted by
$\rho(M):=\max_k|\lambda_k(M)|,$ where $\lambda_k(M)$ is $k$-th
eigenvalue of the matrix $M.$ The maximum singular value of a
complex matrix $M$ is denoted by
$\ol{\sigma}(M):=\sqrt{\lambda_{\max}(M^\ast M)}.$ $I_n$ denotes a
$(n\times n)$ identity matrix, $0_{n\times m}$ denotes a zero
$(n\times m)$ matrix. The dimensions of zero matrices, where they
can be understood from the context, will be omitted for the sake
of brevity.

The angular boundary value of a transfer function $F(z)$ analytic
in the unit disc of the complex plane $|z|<1$ is denoted by
$$
{\widehat F}(\omega) :=\lim_{r\rightarrow 1-}{F(r\e^{i\omega})}.
$$
$\mathcal{H}_2^{p\times m}$ denotes the Hardy space of ($p\times
m$)-matrix-valued transfer functions $F(z)$ of a complex variable
$z$ which are analytic in the unit disc $|z|<1$ and have bounded
$\H2$ norm
$$
\|F\|_2 := \left(
\frac{1}{2\pi}\int\limits_{-\pi}^{\pi}{\tr(\wh{F}(\omega)\wh{F}^\ast(\omega))d\omega}
\right)^{1/2}.
$$
$\mathcal{H}_\infty^{p\times m}$ denotes the Hardy space of
($p\times m$)-matrix-valued transfer functions $F(z)$ of a complex
variable $z$ which are analytic in the unit disc $|z|<1$ and have
bounded $\Hinf$ norm
$$
\|F\|_\infty := \sup_{|z|\geqslant1}\ol{\sigma}(F(z)) =
\esssup_{-\pi\leqslant\omega\leqslant\pi}\ol{\sigma}(\wh{F}(\omega)).
$$

\section{Basic concepts of anisotropy-based robust performance
analysis}\label{sect:background}

For completeness of exposition, we provide the  minimum necessary
background material on the anisotropy of signals and anisotropic
norm of systems. Detailed information on the anisotropy-based
robust performance analysis developed originally by
I.G.\,Vladimitov~\cite{VKS_1995,VKS_1996_1} can be also found
in~\cite{DVKS_2001,VDK_2006}.

Let $\mathbb{L}_2^m$ denote the class of square integrable
$\R^m$-valued random vectors distributed absolutely continuously
with respect to the $m$-dimensional Lebesgue measure $\mes_m$. For
any $W\in\mathbb{L}_2^m$ with PDF $f\colon\R^m\to\R_+$,  the
\textit{anisotropy} $\AA(W)$  is defined in~\cite{VDK_2006} as the
minimal value of the relative entropy
$\mathbf{D}(f\|p_{m,\lambda})$ with respect to the Gaussian
distributions $p_{m,\lambda}$ in $\R^m$ with zero mean and scalar
covariance matrices $\lambda I_m$:
\begin{equation}
\label{AA}
    \AA(W)
    :=
    \min_{\lambda>0}
    \mathbf{D}(f\|p_{m,\lambda})
    =
    \frac{m}{2}
    \ln
    \left(
        \frac{2\pi\e}{m}
        \E|W|^2
    \right)
    -
    \h(W),
\end{equation}
where $\E$ denotes the expectation, $\h(W)$ denotes the
differential entropy  of $W$ with respect to
$\mes_m$~\cite{CT_1991}. It is shown in~\cite{VDK_2006} that the
minimum in~(\ref{AA}) is achieved at $\lambda = \E|W|^2/m$.

Let $W:= (w_k)_{-\infty< k<+\infty}$ be a stationary sequence of
vectors $w_k\in\L_2^m$ interpreted as a discrete-time random
signal. Assemble the elements of $W$ associated with a time
interval $[s,t]$ into a random vector
\begin{equation}
    W_{s:t}
    :=\left[
    \begin{array}{c}
        w_s\\
        \vdots\\
        w_t
    \end{array}\right].
\end{equation}
It is assumed that $W_{0:N}$ is distributed absolutely
continuously for every $N\geqslant 0$. The \textit{mean
anisotropy} of the sequence $W$ is defined in~\cite{VDK_2006} as
the anisotropy production rate per time step by
\begin{equation}
\label{overA0}
    \MA(W)
    :=
    \lim_{N\to+\infty}
    \frac
    {\AA(W_{0:N})}
    {N}.
\end{equation}
Let  $\G^m(\mu,\Sigma)$ denote the class of $\R^m$-valued Gaussian
random vectors with mean $\E w_k = \mu$  and nonsingular
covariance matrix $\cov(w_k):= \E(w_k-\mu)(w_k-\mu)^\T = \Sigma$.
Let $V:= (v_k)_{-\infty < k< +\infty}$ be a sequence of
independent random vectors $v_k\in\G^m(0,I_m)$, i.e. an
$m$-dimensional Gaussian white noise sequence. Suppose $W=GV$ is
produced from $V$ by a stable shaping filter with transfer
function $G(z)\in\calH_2^{m \times m}$. Then the spectral density
of $W$ is given by
\begin{equation}
\label{S}
    S(\omega)
    :=
    \wh{G}(\omega)
    \wh{G}(\omega)^*,
    \quad
    -\pi
    \leqslant
    \omega
    <
    \pi,
\end{equation}
where $\widehat{G}(\omega) := \lim_{r \to 1-} G(r{\rm
e}^{i\omega})$ is the boundary value of the transfer function
$G(z)$. It is shown in~\cite{VKS_1996_1,DVKS_2001} that the mean
anisotropy~(\ref{overA0}) can be computed in terms of the spectral
density~(\ref{S}) and the associated $\calH_2$ norm of the shaping
filter $G$ as
\begin{equation}
\label{meananiso}
    \MA(W)
    =
    -\frac{1}{4\pi}\,
    \int_{-\pi}^{\pi}
    \ln\det
        \frac
        {m S(\omega)}
        {\|G\|_2^2}
    d\omega.
\end{equation}
Since the probability law of the sequence $W$ is completely
determined by the shaping filter $G$ or by the spectral density
$S$, the alternative notations $\MA(G)$ and $\MA(S)$ are also used
instead of $\MA(W)$.

The mean anisotropy functional~(\ref{meananiso}) is always
nonnegative. It takes a finite value if the shaping filter $G$ is
of full rank, otherwise, $\MA(G) =
+\infty$~\cite{VKS_1996_1,DVKS_2001}. The equality $\MA(G)=0$
holds true if and only if $G$ is an all-pass system up to a
nonzero constant factor. In this case, spectral density~(\ref{S})
is described by $
    S(\omega)
    =
    \lambda I_m$,
$    -\pi
    \leqslant
    \omega
    <\pi
$, for some $\lambda >0$, so that  $W$ is a Gaussian white noise
sequence with zero mean and a scalar covariance matrix.

Let $F\in\mathcal{H}_\infty^{p\times m}$  be a LDTI system with an
$m$-dimensional input $W$ and a $p$-dimensional output $Z = FW$.
Let the random input sequence be given by $W=GV,$
 where, as before, $V$ is an $m$-dimensional Gaussian white noise sequence. Denote by
\begin{equation}
\label{calG}
    \mathcal{G}_a
    :=
    \left\{
        G
        \in
        \mathcal{H}_2^{m\times m}:
        \MA(G) \leqslant a
    \right\}
\end{equation}
the set of  shaping filters $G$ that produce Gaussian random
sequences $W$  with mean anisotropy (\ref{meananiso}) bounded by a
given parameter $a\geqslant 0$.

The $a$-\textit{anisotropic norm} of the system $F$ is defined by
I.G.\,Vladimirov as~\cite{VKS_1996_1,DVKS_2001}
\begin{equation}
\label{anorm}
    \sn F \sn_a
    :=
    \sup_{G \in \mathcal{G}_a}
    \frac
    {\|FG\|_2}
    {\|G\|_2}.
\end{equation}
It is shown in~\cite{VKS_1996_1} that the $a$-anisotropic norm of
a given system $F\in\mathcal{H}_\infty^{p\times m}$ is a
nondecreasing continuous function of the mean anisotropy level $a$
which satisfies
\begin{equation}
\label{limits}
    \frac{1}{\sqrt{m}}
    \|F\|_2
    =
    \sn F \sn_0
    \leqslant
    \lim_{a\to+\infty}
    \sn F \sn_a
    =
    \|F\|_\infty.
\end{equation}
These relations show that the $\H2$ and $\Hinf$ norms are the
limiting cases of the $a$-anisotropic norm  as $a\to 0, +\infty$,
respectively.

\section{Strict anisotropic norm bounded real lemma}\label{sect:main result}

Let  $F \in \mathcal{H}_{\infty}^{p \times m}$ be a LDTI system
with an $m$-dimensional input $W$, $n$-dimensional internal state
$X$ and $p$-dimensional output $Z$ governed by
\begin{equation}
\label{eq:plant}
    \begin{bmatrix}
        x_{k+1}\\
        z_k
    \end{bmatrix}
    =
    \begin{bmatrix}
        A & B\\
        C & D
    \end{bmatrix}
    \begin{bmatrix}
        x_k\\
        w_k
    \end{bmatrix},
\end{equation}
where $A$, $B$, $C$, $D$ are appropriately dimensioned real
matrices, and $A$ is stable (its spectral radius satisfies
$\rho(A)<1$). Suppose the input sequence $W$ is a stationary
Gaussian random sequence whose mean anisotropy does not exceed
$a\geqslant 0$, i.e. $W$ is produced from the $m$-dimensional
Gaussian white noise $V$ with zero mean and identity covariance
matrix by an unknown shaping filter $G$ which belongs to the
family $\mathcal{G}_a$ defined by~(\ref{calG}).

\subsection{Main result: a convex formulation}

The theorem below (SANBRL) provides a state-space criterion for
the anisotropic norm of the system~(\ref{eq:plant}) to be strictly
bounded by a given threshold $\gamma$.

\begin{theorem}\label{theorem:SANBRL det} Let $F \in \mathcal{H}_{\infty}^{p \times m}$ be a
system with the state-space realization (\ref{eq:plant}), where
$\rho(A)<1$. Then  its $a$-anisotropic norm (\ref{anorm}) is
strictly bounded by a given threshold $\gamma>0$, i.e.
\begin{equation}\label{eq:bounded aninorm ineq}
\sn F \sn_a
    <
\gamma
\end{equation}
if there exists $q\in(0, \min(\gamma^{-2}, \|F\|_\infty^{-2}))$
such that the inequality
\begin{equation}
    \label{determinant inequality}
    -(\det{(I_m-B^\T RB-qD^\T D)})^{1/m}
    < -(1-q\gamma^2)\e^{2a/m}
\end{equation}
holds true  for a real $(n\times n)$-matrix $R=R^\T\succ 0$
satisfying the linear matrix inequality
\begin{equation}\label{eq:aninorm LMI 2x2}
\left[
\begin{array}{cc}
A^\T RA-R & A^\T RB\\
B^\T RA & B^\T RB-I_m
\end{array}
\right]+q\left[
\begin{array}{c}
C^\T\\D^\T
\end{array}
\right]\left[
\begin{array}{cc}
C & D
\end{array} \right] \prec 0.
\end{equation}
\end{theorem}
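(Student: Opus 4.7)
The plan is to bound $\sn F\sn_a^2$ from above by $\gamma^2$ via a Lagrangian-type argument in which the scalar $q$ and the matrix $R$ play, respectively, the roles of a multiplier for the anisotropy constraint $\MA(G)\le a$ and of a quadratic storage function. The LMI~(\ref{eq:aninorm LMI 2x2}) is to be read as a strict dissipation inequality with supply rate $|w|^2-q|z|^2$, and the strict determinant inequality~(\ref{determinant inequality}) is precisely what will pay the cost of applying Jensen's inequality to the integral representation~(\ref{meananiso}) of the mean anisotropy.

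First, the $(2,2)$-block of~(\ref{eq:aninorm LMI 2x2}) gives $\Lambda:=I_m-qD^\T D-B^\T RB\succ 0$, so the determinant in~(\ref{determinant inequality}) is positive. Introducing $K:=\Lambda^{-1}(B^\T RA+qD^\T C)$, the Schur complement of~(\ref{eq:aninorm LMI 2x2}) with respect to that block yields $Q:=R-(A+BK)^\T R(A+BK)-q(C+DK)^\T(C+DK)\succ 0$ and, by completion of the square, the pointwise identity
\begin{equation*}
|w|^2-q|Cx+Dw|^2-\bigl[(Ax+Bw)^\T R(Ax+Bw)-x^\T Rx\bigr]=x^\T Qx+(w-Kx)^\T\Lambda(w-Kx)
\end{equation*}
for every $x\in\R^n$ and $w\in\R^m$. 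Applying it along the stationary trajectory of~(\ref{eq:plant}) generated by $W=GV$ with $G\in\Ga$ and taking expectations, the Lyapunov increment averages to zero and one obtains $\|G\|_2^2-q\|FG\|_2^2\ge\E\bigl[(w_k-Kx_k)^\T\Lambda(w_k-Kx_k)\bigr]$.

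Next, I view $u_k:=w_k-Kx_k$ as the output of the square filter $T(z):=I_m-K(zI-A)^{-1}B$. The identity can be rewritten as a Lyapunov equation $R=(A+BK)^\T R(A+BK)+Q+q(C+DK)^\T(C+DK)$ with $R\succ 0$ and $Q\succ 0$ on its right-hand side, forcing $A+BK$ to be Schur; hence $T$ is outer with $T(\infty)=I_m$, so $\frac{1}{2\pi}\int_{-\pi}^{\pi}\ln|\det\wh T(\omega)|\,d\omega=\ln|\det T(\infty)|=0$ by the mean-value property. Writing the quadratic expectation in the frequency domain via the spectral density $S$ of $W$, applying the AM--GM inequality $\tr M\ge m(\det M)^{1/m}$ pointwise in $\omega$ to $M=\wh T^*\Lambda\wh T S$, and then Jensen's inequality $\frac{1}{2\pi}\int e^{f}d\omega\ge\exp\bigl(\frac{1}{2\pi}\int f\,d\omega\bigr)$ to $f=\frac{2}{m}\ln|\det\wh T|+\frac{1}{m}\ln\det S$, produces, via~(\ref{meananiso}) and $\MA(G)\le a$, the inequality $\E[(w_k-Kx_k)^\T\Lambda(w_k-Kx_k)]\ge(\det\Lambda)^{1/m}e^{-2a/m}\|G\|_2^2$.

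Combining the two estimates yields $\|FG\|_2^2/\|G\|_2^2\le(1-(\det\Lambda)^{1/m}e^{-2a/m})/q$ uniformly over $G\in\Ga$, while~(\ref{determinant inequality}) is equivalent to $(1-(\det\Lambda)^{1/m}e^{-2a/m})/q<\gamma^2$; taking the supremum in the definition~(\ref{anorm}) then delivers~(\ref{eq:bounded aninorm ineq}). I expect the main obstacle to be the verification that $T$ is outer, since the vanishing of $\int\ln|\det\wh T|\,d\omega$ on the unit circle is exactly what allows Jensen's inequality to absorb the anisotropy penalty into the determinant factor; the strictness of the LMI enters at precisely this point through $Q\succ 0$ and thence the strict Schur property of $A+BK$, while the hypothesis $q<\|F\|_\infty^{-2}$ ensures $I_m-q\wh F^*\wh F\succ 0$ throughout.
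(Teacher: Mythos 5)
Your strategy is genuinely different from the paper's: the paper deduces the theorem from the nonstrict ANBRL of \cite{KMT_2010} by first converting the Schur complement of~(\ref{eq:aninorm LMI 2x2}) into a Riccati \emph{inequality} and then, via the comparison result of Lemma~\ref{lemma:from ineq to eq} (built on de Souza--Xie), producing a stabilizing solution of the Riccati \emph{equation} that still satisfies the determinant bound; you instead argue directly from the LMI through a dissipation identity plus Szeg\H{o}--Jensen estimates on the spectral density, bypassing Lemmas~\ref{lemma:nonstrict ANBRL} and~\ref{lemma:from ineq to eq} entirely. The direct route is viable, but as written it contains two concrete errors.

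First, the completion-of-squares identity is false as stated: substituting $w=Kx$ forces $Q=K^\T K+R-(A+BK)^\T R(A+BK)-q(C+DK)^\T(C+DK)$, i.e.\ your definition of $Q$ omits the term $K^\T K$. The matrix for which the identity (and positivity) actually holds is the negative Schur complement
\begin{equation*}
Q=R-A^\T RA-qC^\T C-(A^\T RB+qC^\T D)\Lambda^{-1}(B^\T RA+qD^\T C)\succ 0,
\end{equation*}
which is exactly what~(\ref{eq:aninorm LMI 2x2}) delivers, so this slip is repairable. Second, and more seriously, the claim that $A+BK$ is Schur---hence that $T$ is outer and $\frac{1}{2\pi}\int\ln|\det\wh T|\,d\omega=0$---is not true under the hypotheses of the theorem; your Lyapunov argument for it rests on the erroneous identity (the correct one reads $R=(A+BK)^\T R(A+BK)+q(C+DK)^\T(C+DK)-K^\T K+Q$, and the sign of $-K^\T K$ blocks the conclusion). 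A scalar counterexample: $n=m=1$, $A=0.9$, $B=1$, $C=D=0$, $R=0.15$, $q$ small; then~(\ref{eq:aninorm LMI 2x2}) becomes $\left[\begin{array}{cc}-0.0285 & 0.135\\ 0.135 & -0.85\end{array}\right]\prec 0$, which holds, yet $A+BK=0.9/0.85\approx 1.059$. Fortunately the outerness of $T$ is not actually needed: $\det T(z)=\det(zI-A-BK)/\det(zI-A)$ is analytic at $z=\infty$ with value $1$, so Jensen's formula gives $\frac{1}{2\pi}\int\ln|\det\wh T|\,d\omega=\sum_{|\lambda_k(A+BK)|>1}\ln|\lambda_k(A+BK)|\geqslant 0$, and a nonnegative value is precisely the direction your chain of inequalities requires, since the resulting factor $\exp\bigl(\frac{2}{m}\cdot\frac{1}{2\pi}\int\ln|\det\wh T|\,d\omega\bigr)\geqslant 1$ only strengthens the lower bound $\E[(w_k-Kx_k)^\T\Lambda(w_k-Kx_k)]\geqslant(\det\Lambda)^{1/m}\e^{-2a/m}\|G\|_2^2$. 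With these two repairs---the correct $Q$, and ``$\geqslant 0$'' in place of ``$=0$'' for the logarithmic integral---your argument goes through and yields a correct, self-contained proof of the theorem.
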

\begin{remark}
Note that the constraints described by the
inequalities~(\ref{determinant inequality}) and (\ref{eq:aninorm
LMI 2x2}) are convex with respect to both variables $q$ and $R$.
Indeed, the function $-(\det(\cdot))^{1/m}$ of a positive definite
($m\times m$)-matrix on the left-hand side of~(\ref{determinant
inequality}) is convex~\cite{NN_1994,BTN_2000}.
\end{remark}
Before to proceed to proving the theorem, let us recall a
nonstrict formulation of ANBRL presented in~\cite{KMT_2010}.
\begin{lemma} \textbf{\emph{\cite{KMT_2010}}}\label{lemma:nonstrict ANBRL}
Let the assumptions of Theorem~\ref{theorem:SANBRL det} be
satisfied. Then
\begin{equation}\label{eq:nonstrict bounded aninorm ineq}
\sn F \sn_a
    \leqslant
\gamma
\end{equation}
if and only if there exists $q\in[0, \min(\gamma^{-2},
\|F\|_\infty^{-2}))$ such that the inequality
\begin{equation}
    \label{low_aniso}
    -\frac{1}{2}
    \ln\det
    (
        (1-q\gamma^2)
        \Sigma
    )
    \geqslant
    a
\end{equation}
is satisfied  for the matrix $\Sigma$ associated  with the
stabilizing ($\rho(A+BL)<1$) solution $\whR=\whR^\T\succcurlyeq 0$
of the algebraic Riccati equation
\begin{eqnarray}
    \label{Ric_R}
    \whR
    & = &
    A^\T \whR A + qC^\T C + L^\T \Sigma^{-1} L, \\
    \label{Ric_L}
    L
    & :=
    &
    \Sigma ( B^\T \whR A + qD^\T C ), \\
    \label{Ric_Sigma}
    \Sigma
    & := &
    ( I_m - B^\T \whR B - qD^\T D )^{-1}.
\end{eqnarray}
\end{lemma}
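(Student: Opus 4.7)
The plan is to derive the nonstrict criterion by combining the variational definition of the anisotropic norm with Lagrangian duality and spectral factorization, following the worst-case shaping-filter approach. First, I would rewrite $\sn F \sn_a = \sup_{G\in\Ga} \|FG\|_2/\|G\|_2$ as a constrained maximization over nonnegative Hermitian spectral densities $S(\omega) = \wh G(\omega)\wh G(\omega)^*$: since $\|G\|_2^2 = \frac{1}{2\pi}\int\tr S\,d\omega$ and $\|FG\|_2^2 = \frac{1}{2\pi}\int\tr(\wh F S\wh F^*)\,d\omega$, the condition $\sn F \sn_a \le \gamma$ translates to
\[
\sup_S \Bigl\{\tfrac{1}{2\pi}\!\!\int_{-\pi}^{\pi}\!\! \tr\bigl((\wh F^*\wh F - \gamma^2 I_m)S\bigr) d\omega : S\succeq 0,\ \MA(S)\le a\Bigr\} \le 0.
\]

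Second, I would dualize the mean-anisotropy constraint using a nonnegative multiplier $q$: setting the Lagrangian variation with respect to $S(\omega)$ to zero yields that the worst-case spectral density has the pointwise form $S_q(\omega) \propto (I_m - q\wh F(\omega)^*\wh F(\omega))^{-1}$, which is well-defined precisely when $q < \|F\|_\infty^{-2}$. The resulting scalar saddle-point problem in $q$ produces the bound $\sn F \sn_a \le \gamma$ as an inequality for a single function of $q$, subject also to $q < \gamma^{-2}$ arising from the requirement that the Lagrangian value at $S_q$ be nonpositive. This explains the admissible range $q\in[0,\min(\gamma^{-2},\|F\|_\infty^{-2}))$ and singles out $q$ as the unique scalar that couples the anisotropy level $a$ to the norm threshold $\gamma$.

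Third, I would invoke bounded-real $J$-spectral factorization to collapse the frequency-domain expressions onto finite-dimensional data. Under the stabilizability condition $\rho(A+BL)<1$, classical Riccati theory furnishes the stabilizing $\whR\succeq 0$ of the system \eqref{Ric_R}--\eqref{Ric_Sigma} and an outer factor $\Delta$ with $I_m - q\wh F(z)^*\wh F(z) = \Delta(z)^*\Sigma^{-1}\Delta(z)$. Jensen's formula then gives $\frac{1}{4\pi}\!\int \ln\det(I_m - q\wh F^*\wh F)\,d\omega = -\frac{1}{2}\ln\det\Sigma$, and the trace integrals reduce analogously to rational expressions in $\whR$, $\Sigma$, and the realization data. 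After rescaling $S_q$ so that $\|FG\|_2^2=\gamma^2\|G\|_2^2$ holds at the worst case, the mean-anisotropy value $\MA(S_q)$ becomes exactly $-\frac{1}{2}\ln\det((1-q\gamma^2)\Sigma)$, so the saddle-point inequality reduces to \eqref{low_aniso}.

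The main obstacle will be the bookkeeping at the endpoints of the admissible interval and establishing both directions of the equivalence. At $q=0$ the Lagrangian degenerates and one must recover the $\H2$ characterization of $\sn F\sn_0$, while as $q$ tends to $\min(\gamma^{-2},\|F\|_\infty^{-2})$ either $1-q\gamma^2$ or $\Sigma^{-1}$ approaches singularity. Rigorously proving ``if and only if'' requires showing that the map $q\mapsto(\whR(q),\Sigma(q))$ is continuous and monotone along the stabilizing Riccati branch, so that existence of $q$ satisfying \eqref{low_aniso} is equivalent to $\sn F\sn_a\le\gamma$, and checking that the shaping filter $G_q$ built from the spectral factor $\Delta$ actually belongs to $\Ga$ and attains (or approximates) the supremum in \eqref{anorm}. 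These monotonicity and continuity properties of the stabilizing branch are what pin down both halves of the equivalence.
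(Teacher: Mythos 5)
The paper itself contains no proof of this lemma: it is imported verbatim from \cite{KMT_2010} and used as a black box in the proof of Theorem~\ref{theorem:SANBRL det}, so the only meaningful comparison is with the route of the cited anisotropy literature \cite{VKS_1996_1,DVKS_2001,KMT_2010} --- and your outline is essentially that route. Linearizing the norm condition over spectral densities, obtaining the worst-case density $S_q\propto(I_m-q\wh F^*\wh F)^{-1}$ by dualizing the mean-anisotropy constraint, factorizing $I_m-q\wh F^*\wh F$ through the stabilizing solution of \eqref{Ric_R}--\eqref{Ric_Sigma}, and converting the entropy integral by the Szeg\H{o}/Jensen identity $\frac{1}{4\pi}\int_{-\pi}^{\pi}\ln\det(I_m-q\wh F^*\wh F)\,d\omega=-\frac{1}{2}\ln\det\Sigma$ are exactly the ingredients of the cited proof, and all of them are correct as stated.

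There are, however, two concrete gaps. First, the step ``after rescaling $S_q$ so that $\|FG\|_2^2=\gamma^2\|G\|_2^2$ holds at the worst case'' cannot be carried out: both the gain $\|FG\|_2/\|G\|_2$ and the mean anisotropy $\MA(S)$ are invariant under $S\mapsto cS$, so no scalar rescaling of $S_q$ can force the worst-case ratio to equal $\gamma$. The correct bookkeeping is that with $S_q=\sigma^2(I_m-q\wh F^*\wh F)^{-1}$ the pointwise identity $\tr(\wh FS_q\wh F^*)=q^{-1}(\tr S_q-m\sigma^2)$ determines the worst-case gain $\mathcal{N}(q)$ through $1-q\,\mathcal{N}(q)^2=m\sigma^2/\|G\|_2^2$, which yields $\MA(S_q)=-\frac{1}{2}\ln\det\bigl((1-q\,\mathcal{N}(q)^2)\Sigma\bigr)$; the threshold $\gamma$ then enters \eqref{low_aniso} only through the monotone substitution $\mathcal{N}(q)\leqslant\gamma$, after choosing $q$ with $\MA(S_q)=a$ in the necessity direction (the continuity and monotonicity of the stabilizing branch that this requires, which you correctly flag, must be imported from \cite{VKS_1996_1,DVKS_2001}). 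Second, for the ``if'' direction, stationarity of the Lagrangian at $S_q$ proves nothing about other admissible inputs; you need an estimate valid for \emph{every} $G\in\Ga$: from $\|FG\|_2^2=q^{-1}\bigl(\|G\|_2^2-\frac{1}{2\pi}\int_{-\pi}^{\pi}\tr((I_m-q\wh F^*\wh F)S)\,d\omega\bigr)$, bound the subtracted term from below by $\|G\|_2^2\,\e^{-2a/m}(\det\Sigma)^{-1/m}$ using the arithmetic--geometric mean and Jensen inequalities together with the factorization identity and $\MA(S)\leqslant a$; inequality \eqref{low_aniso} is equivalent to $\e^{-2a/m}(\det\Sigma)^{-1/m}\geqslant 1-q\gamma^2$ and hence gives $\|FG\|_2\leqslant\gamma\|G\|_2$ directly. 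With the first point corrected and the second supplied, your plan coincides with the proof of \cite{KMT_2010}.
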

\begin{remark}\label{remark:nonstrict ANBRL}
Note that the matrix $\Sigma$ defined by~(\ref{Ric_Sigma}) is
positive definite if and only if $q<\|F\|_\infty^{-2}$. For any
such $q$, the left-hand side of the inequality
\begin{equation*}\label{eq:logdetSigma inequality}
-\ln\det{\Sigma} \geqslant m\ln{(1-q\gamma^2)} + 2a
\end{equation*}
equivalent to~(\ref{low_aniso}) is nonpositive since
$\Sigma\succcurlyeq I_m$. Therefore, any $q$
satisfying~(\ref{low_aniso}) must also satisfy
\begin{equation}\label{eq:q localization inequalities}
\gamma^{-2}(1-\e^{-2a/m}) \leqslant q < \gamma^{-2}.
\end{equation}
For every admissible value of $q$, the stabilizing solution $\whR$
of the Riccati equation (\ref{Ric_R})--(\ref{Ric_Sigma}) is
unique, so that there is a well-defined map $q\mapsto \whR_q$. The
set of those values of $q$ for which the pair $(q,\whR_q)$
satisfies the inequality (\ref{low_aniso}), form an interval
$[q_*,q^*]$ whose endpoints, for a given system $F$, are functions
of $a$ and $\gamma$. This interval becomes a singleton $q_*=q^*$
if and only if $\gamma = \sn F\sn_a$. For that reason, it is not
hard to derive the necessary and sufficient conditions for the
inequality in~(\ref{eq:nonstrict bounded aninorm ineq}) to be
strict. In this case the nonstrict inequality in~(\ref{low_aniso})
becomes the strict one resulting in similar modification
of~(\ref{eq:q localization inequalities}).
\end{remark}

To prove the main result, first we will need the following
assertion:
\begin{lemma}\label{lemma:from ineq to eq}
Let $F \in \mathcal{H}_{\infty}^{p \times m}$ be a system with the
state-space realization (\ref{eq:plant}), where $\rho(A)<1$, and
let the real positive values $\gamma$ and $a$ be given. Suppose
that there exist a real $(n\times n)$-matrix $R=R^\T\succ 0$ and
scalar value $q\in(0, \min(\gamma^{-2}, \|F\|_\infty^{-2}))$ such
that
\begin{multline}\label{eq:aninorm Ric ineq}
A^\T RA-R +qC^\T C+(A^\T RB+qC^\T D)(I_m-B^\T RB-qD^\T
D)^{-1}(B^\T RA+qD^\T C)\prec 0,
\end{multline}
\begin{equation}\label{eq:Sigma positive}
I_m-B^\T RB-qD^\T D \succ 0,
\end{equation}
and
\begin{equation}\label{eq:low ani ineq}
\ln\det{(I_m-B^\T RB-qD^\T D)}
> m\ln{(1-q\gamma^2)}+2a.
\end{equation}
Then there exists a stabilizing solution $\whR=\whR^\T\succcurlyeq
0$ to the algebraic Riccati equation
\begin{multline}\label{eq:aninorm Ric eq}
A^\T\whR A-\whR  +qC^\T C+(A^\T\whR B+qC^\T D)(I_m-B^\T\whR
B-qD^\T D)^{-1}(B^\T\whR A+qD^\T C)= 0
\end{multline}
such that
\begin{equation}\label{eq:Sigma positive whR}
I_m-B^\T \whR B-qD^\T D \succ 0
\end{equation}
and
\begin{equation}\label{eq:low ani ineq whR}
\ln\det{(I_m-B^\T\whR B-qD^\T D)}
> m\ln{(1-q\gamma^2)}+2a.
\end{equation}
Moreover, $\whR \prec R$.
\end{lemma}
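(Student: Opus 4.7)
The plan is to construct $\whR$ as the stabilizing solution of the Riccati equation~(\ref{eq:aninorm Ric eq}) with $\whR\prec R$, by invoking a classical comparison theorem for discrete-time Riccati equations of $H_\infty$-type, and then to deduce~(\ref{eq:Sigma positive whR}) and~(\ref{eq:low ani ineq whR}) from the L\"owner bound $\whR \preceq R$ using monotonicity of the maps $X\mapsto B^\T X B$ and $X\mapsto \ln\det X$.

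First, set $L(X) := (I_m - B^\T X B - q D^\T D)^{-1}(B^\T X A + q D^\T C)$, well-defined for $X=X^\T$ with $I_m - B^\T X B - q D^\T D \succ 0$, and let $\mathcal{R}(X)$ denote the left-hand side of~(\ref{eq:aninorm Ric eq}). A direct completion-of-squares calculation yields the identity
\[
(A + B L(X))^\T X (A + B L(X)) + q (C + D L(X))^\T (C + D L(X)) = X + \mathcal{R}(X) + L(X)^\T L(X).
\]
Applied at $X = R$ with $\mathcal{R}(R) \prec 0$, this rearranges as
\[
R - (A + B L(R))^\T R (A + B L(R)) \succ q(C + D L(R))^\T (C + D L(R)) + L(R)^\T L(R) \succcurlyeq 0,
\]
so $R \succ 0$ is a strict discrete Lyapunov certificate for the closed-loop matrix $A + B L(R)$, whence $\rho(A + B L(R)) < 1$.

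The stability of $A + B L(R)$ together with $\mathcal{R}(R) \prec 0$ and $I_m - B^\T R B - q D^\T D \succ 0$ places the pair $(R,q)$ in the setting of the classical ``Riccati inequality implies Riccati equation'' result: there exists a unique stabilizing solution $\whR = \whR^\T \succcurlyeq 0$ of~(\ref{eq:aninorm Ric eq}) with $\whR \preceq R$, obtainable as the limit of a monotonically decreasing Kleinman--Newton iteration started from $R_0 := R$. Strictness $\whR \prec R$ is inherited from the first iteration step, where the Stein equation for $R - R_1$ has strictly positive forcing $-\mathcal{R}(R)$ and Schur matrix $A + B L(R)$. From $\whR \preceq R$, L\"owner monotonicity of $X \mapsto B^\T X B$ yields
\[
I_m - B^\T \whR B - q D^\T D \succcurlyeq I_m - B^\T R B - q D^\T D \succ 0,
\]
which is~(\ref{eq:Sigma positive whR}); and monotonicity of $X \mapsto \ln\det X$ on the positive definite cone gives
\[
\ln\det(I_m - B^\T \whR B - q D^\T D) \geqslant \ln\det(I_m - B^\T R B - q D^\T D) > m \ln(1 - q \gamma^2) + 2a,
\]
which is~(\ref{eq:low ani ineq whR}).

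The main obstacle is establishing existence of the stabilizing $\whR$ dominated by $R$. Because the Riccati equation~(\ref{eq:aninorm Ric eq}) is indefinite in the disturbance weighting (the $-q D^\T D$ term flips the sign of the ``control'' quadratic cost), the Kleinman--Newton iteration does not automatically preserve positive semidefiniteness of the iterates, and one must propagate along the iteration both the strict negativity of the residual $\mathcal{R}(R_k)$ and the positivity of $I_m - B^\T R_k B - q D^\T D$. The strict Lyapunov identity of the previous paragraph is the key ingredient for this propagation and forces Schur stability of every intermediate $A + B L(R_k)$. Once this discrete-time analogue of the standard $H_\infty$ ``Riccati inequality implies Riccati equation'' theorem is in hand, the remaining conclusions reduce to routine order-theoretic manipulations.
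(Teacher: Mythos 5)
Your final step---deducing (\ref{eq:Sigma positive whR}) and (\ref{eq:low ani ineq whR}) from the L\"owner bound $\whR\preccurlyeq R$ via monotonicity of $X\mapsto B^\T XB$ and of $\ln\det$---is exactly what the paper does, and your completion-of-squares identity is correct. The fatal step is the ``rearranges as'' line. Writing $L=L(R)$ and using your own identity, the correct rearrangement is
\[
R-(A+BL)^\T R(A+BL)\;=\;-\mathcal{R}(R)+q(C+DL)^\T(C+DL)\,-\,L^\T L,
\]
with a \emph{minus} sign on $L^\T L$, not the plus sign you wrote; $\mathcal{R}(R)\prec 0$ therefore does not make $R$ a Lyapunov certificate for $A+BL(R)$. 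Worse, the claim $\rho(A+BL(R))<1$ is genuinely false under the hypotheses of the lemma. Take $n=m=1$, $A=1/2$, $B=1$, $C=D=0$, $R=3/5$: then $I_m-B^\T RB-qD^\T D=2/5>0$, the left-hand side of (\ref{eq:aninorm Ric ineq}) equals $R(A^2-1+B^2R)/(1-B^2R)=-9/40<0$, and (\ref{eq:low ani ineq}) holds for, e.g., $\gamma=1$, $q=0.99$, $a=1$; yet $A+BL(R)=A/(1-B^2R)=5/4$, which is not Schur. Since your Kleinman--Newton iteration started at $R_0=R$ needs $A+BL(R_0)$ to be Schur merely to pose the first Stein equation, and your strictness argument for $\whR\prec R$ leans on the same stability, the construction of $\whR$ collapses at the first step; the ``key ingredient'' you identify for propagating the iteration is precisely the ingredient that is unavailable.

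The conclusion you want (a stabilizing $\whR\succcurlyeq 0$ of (\ref{eq:aninorm Ric eq}) with $\whR\preccurlyeq R$) is true, but it must be reached differently. The paper absorbs the slack of (\ref{eq:aninorm Ric ineq}) into a matrix $Q\succ 0$, invokes Lemma~2.1 of de Souza and Xie \cite{DSX_1992} to obtain $\whR\succcurlyeq 0$ satisfying (\ref{eq:aninorm Ric eq}), (\ref{eq:Sigma positive whR}) and $0\preccurlyeq\whR\preccurlyeq R$ with the closed-loop matrix $\ol{A}$ associated with $\whR$ (not with $R$) having spectrum only in the \emph{closed} unit disc, and then upgrades this to strict stability: by de Souza's comparison lemma \cite{DS_1989}, the difference $R-\whR$ satisfies a Stein-type equation along $\ol{A}$ driven by $Q$, so a unimodular eigenvector $\zeta$ of $\ol{A}$ would force $\zeta^\T Q\zeta\leqslant 0$, contradicting $Q\succ 0$; the same equation then gives $\whR\prec R$. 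If you wish to avoid citing the existence result, you need a monotone scheme that does not require stabilizing the closed loop at $R$ (for instance, iterating the Riccati difference map itself downward from $R$), rather than Newton's method.
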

\begin{proof}
Let us fix $q$. From~(\ref{eq:aninorm Ric ineq}) it follows that
there exists a real $(n\times n)$-matrix $Q=Q^\T\succ 0$ such that
\begin{equation}\label{eq:aninorm Ric ineq plus Q}
A^\T RA-R  + qC^\T C + Q +(A^\T RB+qC^\T D)(I_m-B^\T RB-qD^\T
D)^{-1}(B^\T RA+qD^\T C)= 0.
\end{equation}
Note that~(\ref{eq:Sigma positive}) also yields $I_m-qD^\T D\succ
0$. Then, by virtue of Lemma~2.1 in~\cite{DSX_1992} %(see
%Lemma~\ref{lemma:Riccati monotonicity} in the Appendix)
there exists a real $(n\times n)$-matrix $\whR=\whR^\T\succcurlyeq
0$ satisfying~(\ref{eq:aninorm Ric eq}) such that~(\ref{eq:Sigma
positive whR}) holds true and all eigenvalues of the matrix
$$
\ol{A} := A+B(I_m-B^\T\whR B-qD^\T D)^{-1}(B^\T\whR A+qD^\T C)
$$
lie within the closed unit disc. Furthermore, we have
\begin{equation}\label{eq:solutions ineq}
0\preccurlyeq\whR \preccurlyeq R.
\end{equation}
The inequalities (\ref{eq:low ani ineq}) and (\ref{eq:low ani ineq
whR}) can be rewritten as
\begin{eqnarray}\label{eq:trans low ani ineq}
\det{(I_m-B^\T R B-qD^\T D)} & > &
(1-q\gamma^2)^m\e^{2a},\\\label{eq:trans low ani ineq whR}
\det{(I_m-B^\T\whR B-qD^\T D)} & > & (1-q\gamma^2)^m\e^{2a},
\end{eqnarray}
respectively. From~(\ref{eq:solutions ineq})--(\ref{eq:trans low
ani ineq whR}) it can be seen that
$$
\det{(I_m-B^\T\whR B-qD^\T D)}\geqslant \det{(I_m-B^\T R B-qD^\T
D)}> (1-q\gamma^2)^m\e^{2a}
$$
which proves~(\ref{eq:low ani ineq whR}). Now, let us show that
the matrix $\ol{A}$ is actually stable, i.e. the matrix $\whR$ is
the stabilizing solution of the algebraic Riccati
equation~(\ref{eq:aninorm Ric eq}). Denoting $P:=-R$ and $\whP :=
-\whR$, the equations~(\ref{eq:aninorm Ric ineq plus Q}),
(\ref{eq:aninorm Ric eq}) can be rewritten as
$$
A^\T PA -P -qC^\T C-Q -(A^\T PB-qC^\T D)(I_m-qD^\T D+B^\T
PB)^{-1}(B^\T PA-qD^\T C)= 0,
$$
$$
A^\T\whP A -P-qC^\T C-(A^\T\whP B-qC^\T D)(I_m-qD^\T D+B^\T \whP
B)^{-1}(B^\T\whP A-qD^\T C)= 0,
$$
respectively. Applying Lemma~3.1 from~\cite{DS_1989} %(see
%Lemma~\ref{lemma:RDE on two matrices} in the Appendix),
we have that the matrix $\whP-P$ must satisfy the following
equation:
\begin{equation}\label{eq:aninorm Ric in two matrices}
\whP-P = \olA^\T(\whP-P)\olA+\olA^\T(\whP-P)B(I_m-qD^\T D+B^\T
PB)^{-1}B^\T(\whP-P)\olA+Q.
\end{equation}
Suppose that the matrix $\olA$ is not stable, i.e. there exists a
nonzero vector $\zeta\in\R^n$ and scalar value $\lambda$,
$|\lambda|=1$, such that $\olA \zeta = \lambda \zeta$. Then
from~(\ref{eq:aninorm Ric in two matrices}) it follows that
\begin{equation}\label{eq:zero quadratic forms}
\zeta^\T\olA^\T(\whP-P)B(I_m-qD^\T D+B^\T PB)^{-1}B^\T(\whP-P)\olA
\zeta +\zeta^\T Q\zeta = 0.
\end{equation}
Since by~(\ref{eq:solutions ineq}) and (\ref{eq:Sigma positive})
\begin{multline*}
\zeta^\T\olA^\T(\whP-P)B(I_m-qD^\T D+B^\T PB)^{-1}B^\T(\whP-P)\olA
\zeta\\= \zeta^\T\olA^\T(R-\whR)B(I_m-qD^\T D-B^\T
RB)^{-1}B^\T(R-\whR)\olA \zeta\succcurlyeq 0
\end{multline*}
for all nonzero $\zeta$, from~(\ref{eq:zero quadratic forms}) it
follows that $\zeta^\T Q\zeta\leqslant 0$ for all nonzero $\zeta$.
This contradicts to the assumption that $Q\succ 0$. Therefore, the
matrix $\olA$ is stable, i.e. the matrix $\whR$ is the positive
semi-definite stabilizing solution to~(\ref{eq:aninorm Ric eq}).
Finally, from~(\ref{eq:aninorm Ric in two matrices}) it follows
that $\whR\prec R$, which completes the proof.
\end{proof}

\noindent\textit{Proof of Theorem~\ref{theorem:SANBRL det}.} Note
that by virtue of the Schur Theorem (see
e.g.~\cite{Bernstein_book_2005,Poznyak_2007}) the linear matrix
inequality~(\ref{eq:aninorm LMI 2x2}) is equivalent
to~(\ref{eq:aninorm Ric ineq}), (\ref{eq:Sigma positive}) for all
$q\in(0,\min(\gamma^{-2},$ $\|F\|_\infty^{-2}))$. The
inequality~(\ref{determinant inequality}) can be rewritten
as~(\ref{eq:low ani ineq}) and the strict form
of~(\ref{low_aniso}). By applying Lemma~\ref{lemma:from ineq to
eq}, we conclude that in this case there exists a stabilizing
solution to the Riccati equation~(\ref{eq:aninorm Ric eq}) such
that the inequality~(\ref{eq:low ani ineq whR}) holds true. Then,
by virtue of Theorem~1 in~\cite{KMT_2010} (see
Lemma~\ref{lemma:nonstrict ANBRL}), the
inequality~(\ref{eq:bounded aninorm ineq}) also holds, which was
to be proved.
\qquad\qquad\qquad\qquad\qquad\qquad\qquad\qquad\qquad\qquad\qquad\qquad\,\,\,\,$\square$
\begin{remark}
A solution to the inequalities~(\ref{determinant inequality}),
(\ref{eq:aninorm LMI 2x2}) of Theorem~\ref{theorem:SANBRL det} can
be found by means of available software packages for convex
optimization that allows the convex function
$-(\det(\cdot))^{1/m}$ to be used not only as an objective, but
also in constraints~\cite{Lofberg_2004}.
\end{remark}

\subsection{Computing anisotropic norm by convex
optimization}\label{sect:aninorm computing}

Being convex in both variables ${q}\in(0, \min(\gamma^{-2},
\|F\|_\infty^{-2}))$ and ${R}\succ0$, the
conditions~(\ref{determinant inequality}), (\ref{eq:aninorm LMI
2x2}) of Theorem~\ref{theorem:SANBRL det} are not directly
applicable for computing the minimal $\gamma$ such that the
inequality~(\ref{determinant inequality}) holds true because of
the product of $q$ and $\gamma^2$ on the right-hand side of the
inequality~(\ref{determinant inequality}). One of possible ways to
overcome this obstacle is to apply an auxiliary search algorithm
(for example, the interval bisection method) for finding the
minimal value of $\gamma$ such that the
inequalities~(\ref{determinant inequality}), (\ref{eq:aninorm LMI
2x2}) are solvable. This, however, would inevitably increase the
required computation time. Instead of doing so, let us multiply
both inequalities
$$
    -(\det{(I_m-B^\T RB-qD^\T D)})^{1/m}
    < -(1-q\gamma^2)\e^{2a/m},
$$
$$
\left[
\begin{array}{cc}
A^\T RA-R & A^\T RB\\
B^\T RA & B^\T RB-I_m
\end{array}
\right]+q\left[
\begin{array}{c}
C^\T\\D^\T
\end{array}
\right]\left[
\begin{array}{cc}
C & D
\end{array} \right] \prec 0
$$
of Theorem~\ref{theorem:SANBRL det} by $\eta:= q^{-1}>0$ recalling
that $q>0$ due to the strict localization in~(\ref{eq:q
localization inequalities}), see Remark~\ref{remark:nonstrict
ANBRL}. By rescaling the matrix $R$ as $\Phi := \eta R$, we can
make the SANBRL constraints linear in $\gamma^2$.
\begin{theorem}\label{theorem:SANBRL gamma2 linear}
Suppose the assumptions of Theorem~\ref{theorem:SANBRL det} are
satisfied. Then the $a$-anisotropic norm~(\ref{anorm}) of system
$F$ is strictly bounded by a given threshold $\gamma>0$, i.e.
$$
\sn F\sn_a < \gamma
$$
if there exists $\eta > \gamma^2$
 such that the inequality
\begin{equation}
    \label{eq:determinant inequality gamma2 linear}
    \eta-(\det({\e^{-2a/m}(\eta I_m-B^\T\Phi B-D^\T D)}))^{1/m}
    < \gamma^2
\end{equation}
holds true  for the real $(n\times n)$-matrix $\Phi=\Phi^\T\succ
0$ satisfying the linear matrix inequality
\begin{equation}\label{eq:aninorm LMI 2x2 in Phi eta}
\left[
\begin{array}{cc}
A^\T\Phi A -\Phi + C^\T C & A^\T\Phi B + C^\T D\\
B^\T\Phi A + D^\T C & B^\T\Phi B + D^\T D - \eta I_m
\end{array}
\right] \prec 0.
\end{equation}
\end{theorem}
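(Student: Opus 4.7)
The plan is to reduce Theorem~\ref{theorem:SANBRL gamma2 linear} to Theorem~\ref{theorem:SANBRL det} by the positive rescaling suggested in the passage just before the statement. Given $\eta>\gamma^2$ and $\Phi=\Phi^\T\succ 0$ satisfying (\ref{eq:determinant inequality gamma2 linear}) and (\ref{eq:aninorm LMI 2x2 in Phi eta}), I set
\begin{equation*}
q:=\eta^{-1}\in(0,\gamma^{-2}),\qquad R:=\eta^{-1}\Phi\succ 0,
\end{equation*}
and verify that the pair $(q,R)$ fulfills the hypotheses of Theorem~\ref{theorem:SANBRL det}; the conclusion $\sn F\sn_a<\gamma$ will then be immediate.

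For the linear matrix inequality, I multiply (\ref{eq:aninorm LMI 2x2 in Phi eta}) on both sides by $q=1/\eta>0$, which preserves the negative-definite sign, and use $q\Phi=R$ together with $q\eta I_m=I_m$ to recover (\ref{eq:aninorm LMI 2x2}). For the determinant inequality, the key identity is the factorization
\begin{equation*}
I_m-B^\T RB-qD^\T D \;=\; q\bigl(\eta I_m-B^\T\Phi B-D^\T D\bigr),
\end{equation*}
combined with the homogeneity $\det(cX)=c^m\det X$, which yields $(\det(I_m-B^\T RB-qD^\T D))^{1/m}=q(\det(\eta I_m-B^\T\Phi B-D^\T D))^{1/m}$. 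Substituting into the strict form of (\ref{determinant inequality}) and multiplying through by $\eta>0$ gives $(\det(\eta I_m-B^\T\Phi B-D^\T D))^{1/m}>(\eta-\gamma^2)\e^{2a/m}$, and absorbing $\e^{-2a/m}$ inside the determinant via the same homogeneity produces (\ref{eq:determinant inequality gamma2 linear}). The entire chain is reversible, so the two formulations are equivalent under the bijection $(\eta,\Phi)\leftrightarrow(q,R)$.

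The only nontrivial bookkeeping is the constraint $q<\|F\|_\infty^{-2}$ required by Theorem~\ref{theorem:SANBRL det}, which is not stated explicitly as a hypothesis here. I expect this to be the main (and only) obstacle, but it is resolved as follows: the $(2,2)$-block of (\ref{eq:aninorm LMI 2x2 in Phi eta}) forces $\eta I_m-B^\T\Phi B-D^\T D\succ 0$, and after scaling by $q$ this becomes $I_m-B^\T RB-qD^\T D\succ 0$. Invoking Lemma~\ref{lemma:from ineq to eq} (whose applicability has just been secured by the Schur-complement reading of the LMI) produces a stabilizing Riccati solution $\wh R\preccurlyeq R$ satisfying $I_m-B^\T\wh RB-qD^\T D\succcurlyeq I_m-B^\T RB-qD^\T D\succ 0$, so that $\Sigma\succ 0$ and hence $q<\|F\|_\infty^{-2}$ by Remark~\ref{remark:nonstrict ANBRL}. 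Theorem~\ref{theorem:SANBRL det} then applies to $(q,R)$ and delivers $\sn F\sn_a<\gamma$.
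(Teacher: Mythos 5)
Your reduction is exactly the one the paper intends (the paper offers no separate proof of Theorem~\ref{theorem:SANBRL gamma2 linear} beyond the rescaling remark preceding its statement), and all of your algebra is correct: with $q=\eta^{-1}\in(0,\gamma^{-2})$ and $R=q\Phi$, multiplying (\ref{eq:aninorm LMI 2x2 in Phi eta}) by $q$ recovers (\ref{eq:aninorm LMI 2x2}), and the identity $I_m-B^\T RB-qD^\T D=q(\eta I_m-B^\T\Phi B-D^\T D)$ together with $(\det(cX))^{1/m}=c(\det X)^{1/m}$ converts (\ref{eq:determinant inequality gamma2 linear}) into (\ref{determinant inequality}). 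You also correctly isolate the one genuine subtlety: Theorem~\ref{theorem:SANBRL det} requires $q<\|F\|_\infty^{-2}$, which is not among the hypotheses of Theorem~\ref{theorem:SANBRL gamma2 linear}.

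However, your resolution of that point is circular. You establish $q<\|F\|_\infty^{-2}$ by invoking Lemma~\ref{lemma:from ineq to eq} to produce the stabilizing solution $\whR$ and then reading off $\Sigma\succ0$ via Remark~\ref{remark:nonstrict ANBRL}; but $q\in(0,\min(\gamma^{-2},\|F\|_\infty^{-2}))$ is itself a hypothesis of Lemma~\ref{lemma:from ineq to eq}, so you cannot apply that lemma before the bound on $q$ is already in hand. The gap is real but easily repaired without the lemma: the strict LMI (\ref{eq:aninorm LMI 2x2}) with $R\succ0$ and $\rho(A)<1$ is precisely the discrete-time strict Bounded Real Lemma certificate for the scaled system $\sqrt{q}\,F$, so it already implies $q\|F\|_\infty^2<1$. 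Concretely, for any $\omega$ and any nonzero $w\in\C^m$, substituting $x=(\e^{i\omega}I_n-A)^{-1}Bw$ into the quadratic form of (\ref{eq:aninorm LMI 2x2}) and using $Ax+Bw=\e^{i\omega}x$ yields $q|\wh{F}(\omega)w|^2-|w|^2\leqslant-\varepsilon(|x|^2+|w|^2)$ for some $\varepsilon>0$ independent of $\omega$, whence $q\,\ol{\sigma}(\wh{F}(\omega))^2\leqslant 1-\varepsilon$ uniformly and therefore $q<\|F\|_\infty^{-2}$. With that substitution in place of the appeal to Lemma~\ref{lemma:from ineq to eq}, your argument closes.
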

\begin{remark}
With the notation $\wh{\gamma} := \gamma^2$, the conditions of
Theorem~\ref{theorem:SANBRL gamma2 linear} allow the minimal
$\gamma$ to be computed from a solution to the following convex
optimization problem:
\begin{equation}\label{eq:minimum gamma2}
\begin{array}{c}
\mathrm{find}\quad\wh{\gamma_\star} = \inf{\wh{\gamma}}\\
\mathrm{over}\quad \Phi, \eta, \wh{\gamma}
\quad\mathrm{satisfying}\quad\mbox{(\ref{eq:determinant inequality
gamma2 linear}), (\ref{eq:aninorm LMI 2x2 in Phi eta})}.
\end{array}
\end{equation}
Once the minimal $\wh{\gamma}_\star$ is found, the $a$-anisotropic
norm of the system $F$ is computed as
\begin{equation}\label{eq:aninorm computing}
\sn F\sn_a = \sqrt{\wh{\gamma}_\star}.
\end{equation}
\end{remark}
Note that, in contrast to the results
of~\cite{VKS_1996_1,DVKS_2001}, the presented technique for
computing the $a$-anisotropic norm does not employ the solution of
a complex system of cross-coupled equations via a homotopy-based
iterative algorithm~\cite{DKSV_1997_report}. In
Section~\ref{sect:numerical example} we will consider benchmark
results which demonstrate advantages and drawbacks of our convex
optimization approach in comparison with the earlier method.

\subsection{Limiting cases}\label{sect:limiting cases}

Let us now consider the conditions of Theorem~\ref{theorem:SANBRL
gamma2 linear} in two important cases when the mean anisotropy
level $a$ is equal to zero and tends to infinity, respectively.
Since the scaled $\H2$ norm and $\Hinf$ norm are two limiting
cases of the $a$-anisotropic norm as $a\to0,+\infty$
(see~(\ref{limits})), the inequalities~(\ref{eq:determinant
inequality gamma2 linear}), (\ref{eq:aninorm LMI 2x2 in Phi eta})
are expected to provide the criteria for verifying if the scaled
$\H2$ norm and $\Hinf$ norm of the system $F$ are bounded by a
given threshold $\gamma$.

First, we study the case of zero mean anisotropy level under the
convex constraints of Theorem~\ref{theorem:SANBRL gamma2 linear},
when the inequality~(\ref{eq:determinant inequality gamma2
linear}) becomes
\begin{equation}\label{eq:det inequality a=0}
   \eta -(\det{(\eta I_m-B^\T\Phi B-D^\T D)})^{1/m}
    < \gamma^2.
\end{equation}
By applying the arithmetic-geometric mean inequality to the
eigenvalues of the matrix $\eta I_m-B^\T\Phi B-D^\T D\succ0$, it
follows that
$$
(\det{(\eta I_m-B^\T\Phi B-D^\T D)})^{1/m}\leqslant
\ds\frac{1}{m}\tr(\eta I_m-B^\T\Phi B-D^\T D)
$$
(see e.g.~\cite[p.~275]{Bernstein_book_2005}.) So,
from~(\ref{eq:det inequality a=0}) it follows that
$$
\eta-\ds\frac{1}{m}\tr(\eta I_m-B^\T\Phi B-D^\T D) < \gamma^2
$$
or, equivalently,
\begin{equation}\label{eq:H2 trace inequality}
\tr(B^\T\Phi B+D^\T D) < m\gamma^2.
\end{equation}
By virtue of the Schur Theorem, the LMI~(\ref{eq:aninorm LMI 2x2
in Phi eta}) is equivalent to
$$
A^\T\Phi A-\Phi+C^\T C \prec (A^\T\Phi B + C^\T D)(B^\T\Phi B +
D^\T D - \eta I_m)^{-1}(A^\T\Phi B + C^\T D)^\T,
$$
$$
B^\T\Phi B + D^\T D - \eta I_m \prec 0,
$$
which implies that
\begin{equation}\label{eq:H2 Lyapunov inequality}
A^\T\Phi A-\Phi+C^\T C \prec 0.
\end{equation}
Now note that the fulfillment of the inequalities~(\ref{eq:H2
trace inequality}) and (\ref{eq:H2 Lyapunov inequality})
 is equivalent to
\begin{equation}\label{eq:H2 norm inequality}
\frac{1}{\sqrt{m}}\|F\|_2 < \gamma
\end{equation}
(see e.g.~\cite{Poznyak_2007}.)

In the case when $a\to+\infty$, the localization
$\gamma^2<\eta<\gamma^2/(1-\e^{-2a/m})$ yields
$\eta\to\gamma^{2}$; the inequality~(\ref{eq:determinant
inequality gamma2 linear}) becomes ineffective. In this case, by
rescaling the matrix $\bar\Phi:=\gamma\Phi$ and the Schur Theorem,
the LMI~(\ref{eq:aninorm LMI 2x2}) can be rewritten in the form
\begin{equation}\label{eq:Hinf LMI 3x3}
\left[
\begin{array}{ccc}
A^\T\bar\Phi A-\bar\Phi & A^\T \bar\Phi B & C^\T\\
B^\T \bar\Phi A & B^\T \bar\Phi B-\gamma I_m & D^\T\\
C & D & -\gamma I_p
\end{array}
\right] \prec 0
\end{equation}
which is well-known in the context of the discrete time $\Hinf$
control (see e.g.~\cite{DSX_1992,GA_1994}.) This fact is closely
related to the convergence $\lim_{a\to+\infty}{\sn F\sn_a} =
\|F\|_\infty$ in~(\ref{limits}) whereby the
inequality~(\ref{eq:bounded aninorm ineq}) `approximates'
\begin{equation}\label{eq:bounded Hinfnorm inquality}
\|F\|_\infty<\gamma
\end{equation}
for sufficiently large values of $a.$ Thus, in the limit, as
$a\to+\infty$, Theorem~\ref{theorem:SANBRL gamma2 linear} becomes
$\Hinf$ Bounded Real Lemma which establishes the equivalence
between~(\ref{eq:bounded Hinfnorm inquality}) and existence of a
positive definite solution to the LMI~(\ref{eq:Hinf LMI 3x3}).

\section{Numerical experiments and computational
benchmark}\label{sect:numerical example}

We have performed extensive numerical experiments to test the
efficiency and reliability of the proposed convex optimization
technique for computing the $a$-anisotropic norm of LDTI systems.
The computations, whose results are provided below, have been
carried out by means of MATLAB 7.9.0 (R2009b) and Control System
Toolbox in combination with the YALMIP
interface~\cite{Lofberg_2004} and SeDuMi solver~\cite{Sturm_1999}
with CPU P8700 $2\times2.53$GHz.

Let us first note that the number of variables of the resulting
convex optimization problem~(\ref{eq:determinant inequality gamma2
linear})--(\ref{eq:aninorm computing}) is $\frac{1}{2}n(n+1)+2$
and does not depend on the dimensions of the system input and
output, whereas the size of the LMI~(\ref{eq:aninorm LMI 2x2 in
Phi eta}) is $(n+m)\times(n+m)$ and does not depend on the system
output dimension $p$ either. The number of unknown variables in
the equation system of~\cite{VKS_1996_1,DVKS_2001} is $n(n+1)+1$.
For this reason, we carried out the computational experiments for
some fixed $p$. Using the MATLAB functions \texttt{drss} and
\texttt{randn}, we randomly generated 100 state-space realizations
of LDTI systems with random (positive) sampling time for each
combination of the dimensions from the sets $n = \{1\ldots12\}$,
$m = \{3,4,5\}$, $p = 2$. Thus, we obtained 3600 stable
realizations, possibly with poles arbitrarily close to the
boundary of the unit circle (up to the machine epsilon). For each
of them, we computed the $a$-anisotropic norm via the solution of
the convex optimization problem (COP) of Section~\ref{sect:aninorm
computing} and by I.G.\,Vladimirov's homotopy-based algorithm
(HBA)~\cite{DKSV_1997_report} for solving the system of three
cross-coupled nonlinear matrix algebraic equations derived
in~\cite{VKS_1996_1,DVKS_2001}. The computations were carried out
for 27 different values of the input mean anisotropy level
$a\in[0,20]$. Thus, the compared algorithms run 97200 times. The
required accuracy (tolerance) in all computations was set to
$10^{-9}$.

In computing the $a$-anisotropic norm by solving the convex
optimization problem we considered a run to be failed if the
optimization problem appeared to be infeasible or an unexpected
solver crash happened. If the issues were caused by the solver
itself, but the solution was, nevertheless, found, the run was
considered to be successful. In applying the homotopy-based
algorithm we stopped computations and concluded that the algorithm
fails if the prescribed accuracy had not been achieved after 2500
iterations. Also, a run of the homotopy-based algorithm was
considered to be a failure if one of the equations appeared to be
insolvable or an unexpected crashes of the MATLAB solvers for
Lyapunov and Riccati equations happened. Here, by the `solver
crashes' we mean those which do not originate from a particular
numerical algorithm used. Nevertheless, these events have also
been taken into consideration while assessing the reliability.

\begin{table}[!thpb]
  \caption{Mean CPU time required; $n=1\ldots 12,$ $m=3$, $p=2$}
  \label{table:test results time m=3 p=2}
  \begin{center}\scriptsize
  \begin{tabular}{|c||c|c|c||c|}\hline
    & \multicolumn{3}{|c||}{\textbf{COP}} & \textbf{HBA}\\\hline
$n$ & \textbf{Mean CPU} & \textbf{Mean YALMIP} & \textbf{Mean
SeDuMi} & \textbf{Mean
CPU}\\
 & \textbf{time (s)} & \textbf{time (s)} & \textbf{time (s)} & \textbf{time (s)}\\\hline
1 & 0.4840  & 0.2652 & 0.1161 & 0.4448\\
2 & 0.7944 & 0.4411 & 0.1406 & 0.5530\\
3 & 1.2102 & 0.6618 & 0.1690 & 1.0521\\
4 & 1.5484 & 0.8503 & 0.1722 & 0.9302\\
5 & 2.1429 & 1.1148 & 0.2851 & 1.2997\\
6 & 2.5697 & 1.4755 & 0.2555 & 1.7038\\
7 & 2.9299 & 1.6200 & 0.2245 & 1.4774\\
8 & 3.4697 & 1.8860 & 0.2418 & 1.6226\\
9 & 4.0750 & 2.1866 & 0.2515 & 1.8937\\
10 & 4.8381 & 2.5122 & 0.2794 & 1.9718\\
11 & 5.5680 & 2.9051 & 0.3054 & 2.0984\\
12 & 6.3453 & 3.2828 & 0.3387 & 2.8205\\\hline
  \end{tabular}
  \end{center}
\end{table}

\begin{table}[!thpb]
  \caption{Mean CPU time required; $n=1\ldots 12,$ $m=5$, $p=2$}
  \label{table:test results time m=5 p=2}
  \begin{center}\scriptsize
  \begin{tabular}{|c||c|c|c||c|}\hline
    & \multicolumn{3}{|c||}{\textbf{COP}} & \textbf{HBA}\\\hline
$n$ & \textbf{Mean CPU} & \textbf{Mean YALMIP} & \textbf{Mean
SeDuMi} & \textbf{Mean
CPU}\\
 & \textbf{time (s)} & \textbf{time (s)} & \textbf{time (s)} & \textbf{time (s)}\\\hline
1 & 0.6575 & 0.3234 & 0.1317 & 0.2111\\
2 & 1.1681 & 0.6147 & 0.1588 & 0.3328\\
3 & 1.6782 & 0.9088 & 0.1730 & 0.4330\\
4 & 2.2269 & 1.2423 & 0.1936 & 0.5451\\
5 & 2.8304 & 1.5783 & 0.2162 & 0.7714\\
6 & 3.4233 & 1.8830 & 0.2088 & 0.6867\\
7 & 4.0856 & 2.2377 & 0.2345 & 0.9555\\
8 & 5.1935 & 2.8440 & 0.2464 & 1.0044\\
9 & 6.0724 & 3.2426 & 0.2739 & 1.2394\\
10 & 7.0646 & 3.7505 & 0.2942 & 1.3387\\
11 & 7.9707 & 4.2034 & 0.3230 & 1.7716\\
12 & 8.9616 & 4.7629 & 0.3615 & 1.8914\\\hline
  \end{tabular}
  \end{center}
\end{table}

The benchmark results for $m = \{3,5\}$ are presented in
Tables~\ref{table:test results time m=3 p=2}--\ref{table:test
results fails a} and Figure~\ref{fig:CPU time}. The results for $m
= 4$ do not contradict the general tendency and, for the sake of
brevity,  are not presented here. In Tables~\ref{table:test
results time m=3 p=2}, \ref{table:test results time m=5 p=2}, the
mean CPU time required to compute the anisotropic norm was
calculated as the average value over all realizations of equal
dimensions and over a set of 27 different values of the input mean
anisotropy level $a\in[0,20]$. Comparison of the data shows that
computation of the $a$-anisotropic norm from the solution to COP
requires on average more CPU time than its computation by HBA.
Moreover, the average CPU time grows not only with the system
order $n$ but also with the system input dimension $m$ much faster
than that for HBA. Furthermore, the time required by the YALMIP
interface to form the optimization constraints is affected by the
number of these constraints which depends on the input dimension
$m$ and growth considerably with increase of $m$ in comparison
with the time required by the SeDuMi solver.

\begin{figure}[!thpb]
\begin{minipage}{81mm}
      \centering
      \includegraphics[width=81mm]{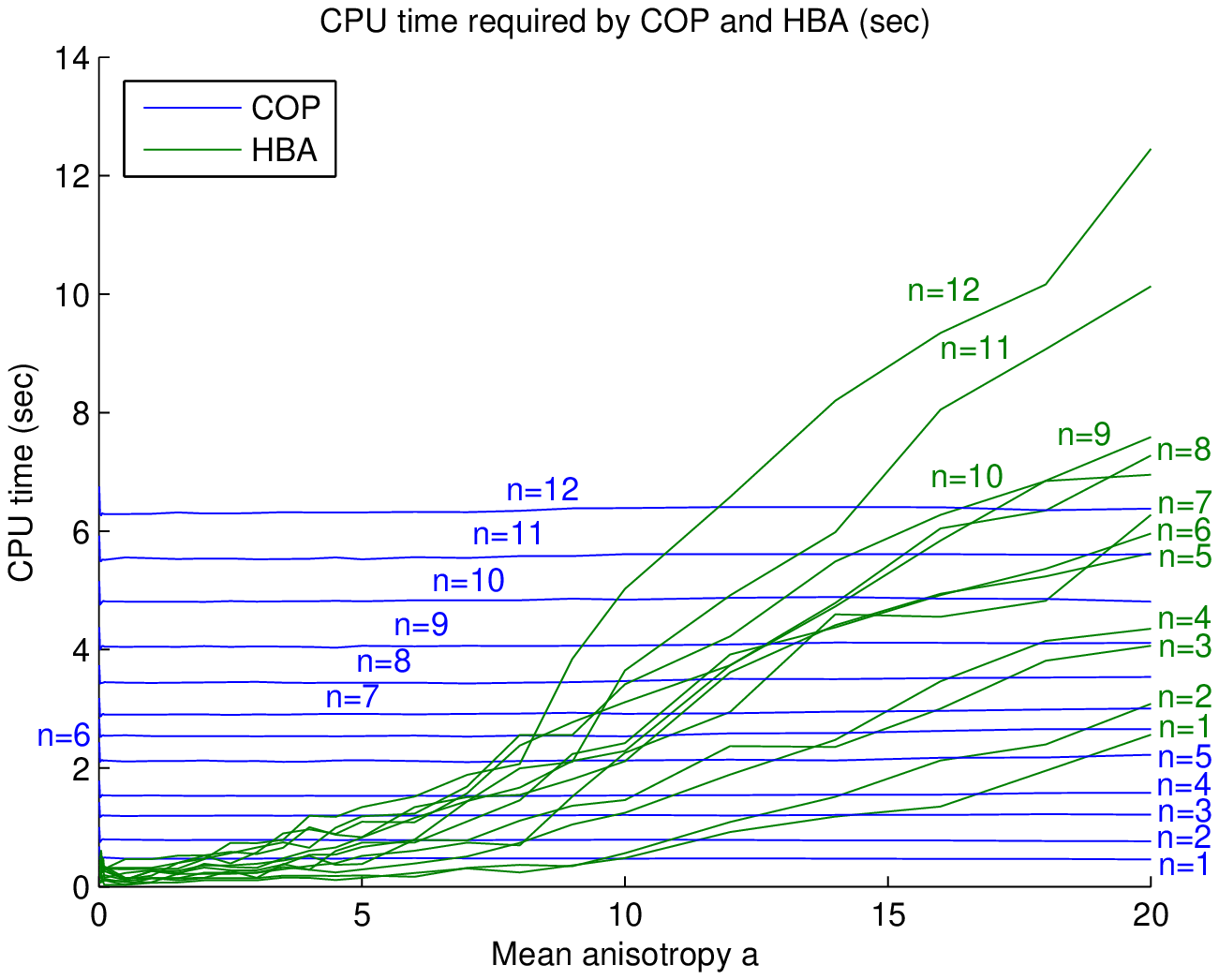}\newline
      (a)
\end{minipage}
\begin{minipage}{81mm}
      \centering
      \includegraphics[width=81mm]{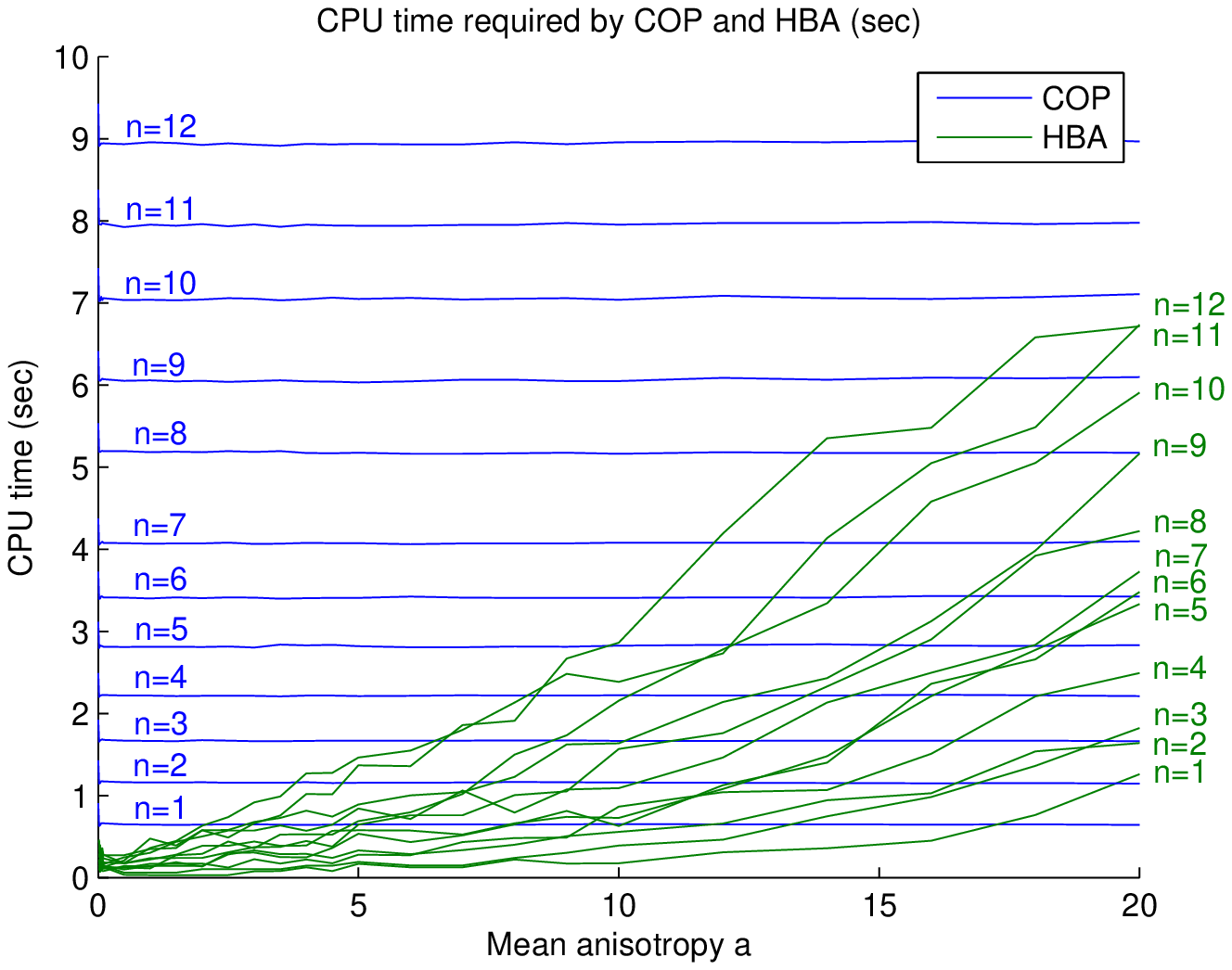}\newline
      (b)
      \end{minipage}
\caption{Mean CPU time required to compute the $a$-anisotropic
norm by the convex optimization (COP) and homotopy-based algorithm
(HBA); $n = \{1\ldots 12\}$, $p=2$, $m = 3$ (a), $m = 5$
(b).}\label{fig:CPU time}
\end{figure}

At the same time, the average values in Tables~\ref{table:test
results time m=3 p=2}, \ref{table:test results time m=5 p=2} do
not take into account the growth of the mean CPU time required by
HBA over all realizations of equal dimensions as the mean
anisotropy level $a$ increases. This growth is clearly
demonstrated by the diagrams in Figure~\ref{fig:CPU time}, where
the mean CPU time is shown as a function of the mean anisotropy
level $a$ for all groups of realizations of equal dimensions.
These diagrams also show that the mean CPU time required by COP
does not change noticeably with the increase of $a$.

The data in Tables~\ref{table:test results fails m=3 p=2},
\ref{table:test results fails m=5 p=2} are concerned with the
reliability of the algorithms being compared. The percentages of
successful and failed runs, infeasible problems, as well as runs
with numerical problems (COP) including the maximum admissible
number of iterations (HBA) exceeded were calculated as the average
value over all realizations of equal dimensions and over the set
of 27 different values of the input mean anisotropy level
$a\in[0,20]$. The analysis of Tables~\ref{table:test results fails
m=3 p=2}, \ref{table:test results fails m=5 p=2} shows that
computation of the $a$-anisotropic norm from the solution to COP
have more successful runs than HBA  on average. Moreover, all
failed runs of the optimization-based algorithm are caused by
infeasibility of the respective COP. Their fraction corresponds to
the percentage of realizations with poles located very close to
the unit circle in the total number of tested realizations. It
should be noted that HBA had the same percentage of runs failed
because of the infeasibility of algebraic Riccati equation.
However, this algorithm is also characterized by a certain
percentage of runs with the maximum number of iterations exceeded
and runs which resulted in unexpected crashes in the Lyapunov and
Riccati equation solvers.

\begin{table}[!thpb]
  \caption{Successful and failed runs; $n=1\ldots 12,$ $m=3$, $p=2$}
  \label{table:test results fails m=3 p=2}
  \begin{center}\scriptsize
  \begin{tabular}{|c||c|c|c|c||c|c|c|c|}\hline
  & \multicolumn{4}{|c||}{\textbf{COP}} & \multicolumn{4}{|c|}{\textbf{HBA}}\\\hline
 & \textbf{Succ.}  &\textbf{Failed}  & \textbf{Infeas.} & \textbf{Numer.} & \textbf{Succ.} & \textbf{Failed} &  \textbf{Infeas.} & \textbf{Max. iter.}\\
$n$ &   \textbf{(\%)} & \textbf{(\%)} & \textbf{(\%)} &
 \textbf{probl. (\%)} & \textbf{(\%)} & \textbf{(\%)} & \textbf{(\%)} &  \textbf{exceed. (\%)}\\\hline
1 & 100 & 0 & 0 & 5.1538 & 85.5385 & 14.4615 & 0 & 9.0385\\
2 & 99 & 1 & 1 & 4.0385 & 81.6923 & 18.3077 & 1 & 10.4231\\
3 & 90.1154 & 9.8846 & 9.8846 & 5.1154 & 70.3462 & 29.6538 & 9.8846 & 12.7308\\
4 & 95.5769 & 4.4231 & 4.4231 & 7.0385 & 75.2692 & 24.7308 & 4.4231 & 11.5769\\
5 & 92 & 8 & 8 & 9.2692 & 70.2308 & 29.7692 & 8 & 12.3462\\
6 & 91 & 9 & 9 & 12.2692 & 66.8846 & 33.1154 & 9 & 15.3846\\
7 & 94.7692 & 5.2308 & 5.2308 & 16.0769 & 68.0769 & 31.9231 & 5.2308 & 14.3077\\
8 & 88.1154 & 11.8846 & 11.8846 & 15.8077 & 67.6154 & 32.3846 & 11.8846 & 13.7692\\
9 & 92.3846 & 7.6154 & 7.6154 & 18.3846 & 63.5385 & 36.4615 & 7.6154 & 16.5000\\
10 & 88.4231 & 11.5769 & 11.5769 & 21.5000 & 62.8846 & 37.1154 & 11.5769 & 13.5000\\
11 & 89.8462 & 10.1538 & 10.1538 & 21.1154 & 65.7692 & 34.2308 & 10.1538 & 12.6923\\
12 & 91.4231 & 8.5769 & 8.5769 & 26.3077 & 66.4231 & 33.5769 & 8.5769 & 16\\
\hline
  \end{tabular}
  \end{center}
\end{table}

\begin{table}[!thpb]
  \caption{Successful and failed runs; $n=1\ldots 12,$ $m=5$, $p=2$}
  \label{table:test results fails m=5 p=2}
  \begin{center}\scriptsize
  \begin{tabular}{|c||c|c|c|c||c|c|c|c|}\hline
  & \multicolumn{4}{|c||}{\textbf{COP}} & \multicolumn{4}{|c|}{\textbf{HBA}}\\\hline
 & \textbf{Succ.}  &\textbf{Failed}  & \textbf{Infeas.} & \textbf{Numer.} & \textbf{Succ.} & \textbf{Failed} &  \textbf{Infeas.} & \textbf{Max. iter.}\\
$n$ &   \textbf{(\%)} & \textbf{(\%)} & \textbf{(\%)} &
 \textbf{probl. (\%)} & \textbf{(\%)} & \textbf{(\%)} & \textbf{(\%)} &  \textbf{exceed. (\%)}\\\hline
1 & 100 & 0 & 0 & 4.1154 & 92.0769 & 7.9231 & 0 & 3.6923\\
2 & 97 & 3 & 3 & 3.7308 & 87.1538 & 12.8462 & 3 & 5.0769\\
3 & 93 & 7 & 7 & 3.6923 & 82.1154 & 17.8846 & 7 & 6.0769\\
4 & 95.8462 & 4.1538 & 4.1538 & 4.4615 & 82.9231 & 17.0769 & 4.1538 & 7.9615\\
5 & 92.6923 & 7.3077 & 7.3077 & 4.6154 & 77.4615 & 22.5385 & 7.3077 & 9.6923\\
6 & 97.7308 & 2.2692 & 2.2692 & 4.8846 & 80.7308 & 19.2692 & 2.2692 & 8.2692\\
7 & 87.3846 & 12.6154 & 12.6154 & 5.1538 & 69.6154 & 30.3846 & 12.6154 & 10.6538\\
8 & 89.9615 & 10.0385 & 10.0385 & 4.5000 & 77.9615 & 22.0385 & 10.0385 & 9.9615\\
9 & 88.3462 & 11.6538 & 11.6538 & 5.4615 & 70.9231 & 29.0769 & 11.6538 & 10.5769\\
10 & 92.7308 & 7.2692 & 7.2692 & 5.9231 & 74.8462 & 25.1538 & 7.2692 & 12.2692\\
11 & 86.5769 & 13.4231 & 13.4231 & 6.9615 & 67.6538 & 32.3462 & 13.4231 & 12.6538\\
12 & 91.3077 & 8.6923 & 8.6923 & 7.9231 & 70.2692 & 29.7308 & 8.6923 & 14.3846\\
\hline
  \end{tabular}
  \end{center}
\end{table}

Finally, Table~\ref{table:test results fails a} gathers together
the mean CPU time required and percentages of successful and
failed runs computed as average values over all realizations
irrespective of dimensions for different values of the input mean
anisotropy level $a\in[0,20]$. It can be seen that the mean CPU
time required by HBA grows with increase of $a$. The same is true
in regard to the percentage of the HBA runs failed by the maximum
number of iterations exceeded. The percentage of HBA successful
runs decreases considerably as $a$ increases. At the same time,
the mean CPU time required by COP and the percentage of successful
runs of this algorithm change insignificantly with the growth of
the input mean anisotropy level.

\begin{table}[!thpb]
  \caption{Mean CPU time required, successful and failed runs for different values of $a$; all tested realizations}
  \label{table:test results fails a}
  \begin{center}\scriptsize
  \begin{tabular}{|l||c|c|c|c||c|c|c|c|}\hline
  & \multicolumn{4}{|c||}{\textbf{COP}} & \multicolumn{4}{|c|}{\textbf{HBA}}\\\hline
 & \textbf{Mean CPU} & \textbf{Succ.}  & \textbf{Infeas.} & \textbf{Numer.} & \textbf{Mean CPU} & \textbf{Succ.} &  \textbf{Infeas.} & \textbf{Max. iter.}\\
$a$ & \textbf{time (s)} & \textbf{(\%)} & \textbf{(\%)} &
 \textbf{probl. (\%)} & \textbf{time (s)} & \textbf{(\%)} & \textbf{(\%)} &   \textbf{exceed. (\%)}\\\hline
0 & 3.9341 & 93.2083 & 6.7917 & 14.5417 & --- & 0 & --- & 0\\
0.02 & 3.6000 & 93.2083 & 6.7917 & 6.7917 & 0.2864 & 88.7083 & 6.7917 &  2.3333\\
0.04 & 3.6098 & 93.2083 & 6.7917 & 6.7917 & 0.2400 & 89.8333 & 6.7917 &  1.7500\\
0.06 & 3.6273 & 93.1667 & 6.8333 & 6.8333 & 0.2261 & 90.4167 & 6.8333 &  1.6250\\
0.08 & 3.6282 & 93.1667 & 6.8333 & 6.8333 & 0.2113 & 90.7500 & 6.8333 &  1.4167\\
0.1 & 3.6246 & 93.1667 & 6.8333 & 6.8333 & 0.1893 & 91.0417 & 6.8333 &  1.1250\\
0.5 & 3.6184 & 93.1667 & 6.8333 & 6.8750 & 0.1615 & 92.2083 & 6.8333 &  0.6667\\
1 & 3.6185 & 93.1667 & 6.8333 & 7.0417 & 0.2184 & 91.3333 & 6.8333 &  1.5833\\
1.5 & 3.6175 & 93.0417 & 6.9583 & 7.1667 & 0.2509 & 90.9167 & 6.9583 &  2.0000\\
2 & 3.6189 & 93.0000 & 7.0000 & 7.2500 & 0.3209 & 89.8333 & 7.0000 &  3.0417\\
2.5 & 3.6195 & 92.9167 & 7.0833 & 7.3750 & 0.3574 & 89.3750 & 7.0833 &  3.4583\\
3 & 3.6179 & 92.8750 & 7.1250 & 7.4167 & 0.3926 & 88.7917 & 7.1250 &  3.9167\\
3.5 & 3.6163 & 92.8333 & 7.1667 & 7.4583 & 0.4593 & 87.8750 & 7.1667 &  4.7917\\
4 & 3.6196 & 92.7917 & 7.2083 & 7.5417 & 0.5338 & 86.7500 & 7.2083 &  5.7917\\
4.5 & 3.6206 & 92.7917 & 7.2083 & 7.6250 & 0.5498 & 86.3750 & 7.2083 &  6.0000\\
5 & 3.6197 & 92.7083 & 7.2917 & 7.7083 & 0.6754 & 84.3333 & 7.2917 &  7.6667\\
6 & 3.6213 & 92.5000 & 7.5000 & 8.0833 & 0.7531 & 83.1250 & 7.5000 &  8.4167\\
7 & 3.6201 & 92.3750 & 7.6250 & 8.3750 & 0.9554 & 79.9167 & 7.6250 &  10.7500\\
8 & 3.6265 & 92.2083 & 7.7917 & 8.8333 & 1.1690 & 76.5417 & 7.7917 &  13.0417\\
9 & 3.6308 & 92.1250 & 7.8750 & 9.3333 & 1.4899 & 72.0000 & 7.8750 &  16.3333\\
10 & 3.6307 & 92.1250 & 7.8750 & 9.8750 & 1.7947 & 68.0417 & 7.8750 &  19.7083\\
12 & 3.6445 & 92.1667 & 7.8333 & 10.6250 & 2.4873 & 58.7500 & 7.8333 &  27.4167\\
14 & 3.6441 & 92.2917 & 7.7083 & 11.2917 & 3.1595 & 49.3750 & 7.7083 &  33.6250\\
16 & 3.6492 & 92.1250 & 7.8750 & 12.2083 & 3.8370 & 40.5000 & 7.8750 &  37.5833\\
18 & 3.6517 & 92.0417 & 7.9583 & 12.7917 & 4.4236 & 32.8750 & 7.9583 &  39.6667\\
20 & 3.6545 & 92.2917 & 7.7083 & 12.9167 & 5.1172 & 26.5000 & 7.7083 &  38.2917\\
\hline
  \end{tabular}
  \end{center}
\end{table}

\section{Conclusion}\label{sect:conclusion}

We have introduced the Strict Anisotropic Norm Bounded Real Lemma
in terms of inequalities providing a state-space criterion for
verifying if the anisotropic norm of a LDTI system is bounded by a
given threshold value. This result extends the $\Hinf$ Bounded
Real Lemma to stochastic systems where the statistical
uncertainty, present in the random disturbances, is quantified by
the mean anisotropy level.

The derived criterion employs the solution of an LMI and an
inequality on the determinant of a related positive definite
matrix and a positive scalar parameter. SANBRL in terms of
inequalities provides a key result which is used for the design of
suboptimal (or $\gamma$-optimal) anisotropic controllers via
convex optimization and semidefinite programming to ensure a
specified upper bound on the anisotropic norm of the closed-loop
system (respectively, to minimize the norm). It can also be
combined with additional specifications for the controllers.

\section*{Acknowledgements}

The authors deeply appreciate the invaluable help of Igor
Vladimirov who corrected the statement and proof of the main
result in~\cite{KMT_2010} and actually rearranged the text of that
paper. The authors kindly thank Didier Henrion and Arkadii
Nemirovskii for helpful discussions and advices on the formulation
of Theorems~\ref{theorem:SANBRL det} and \ref{theorem:SANBRL
gamma2 linear}. Useful comments of the anonymous reviewer are also
gratefully appreciated.

This work was supported by the Russian Foundation for Basic
Research (grant 11-08-00714-a) and Program for Fundamental
Research No.~15 of EEMCP Division of Russian Academy of Sciences.

\end{document}